\newtheorem{lemma}{Lemma}
\newtheorem{theorem}{Theorem}
\newtheorem{definition}{Definition}
\newcommand{\FIXED}{\textsc{Fixed}\xspace}
\newcommand{\RANDOM}{\textsc{Random}\xspace}
\newcommand{\STRATEGIC}{\textsc{Strategic}\xspace}
\newcommand{\citet}[1]{\citeauthor{#1}~\shortcite{#1}}
\title{Robust Market Making via Adversarial Reinforcement Learning}
\author{
Thomas Spooner\And
Rahul Savani
\affiliations
Department of Computer Science, University of Liverpool\\
\emails
\{t.spooner, rahul.savani\}@liverpool.ac.uk
}
\begin{document}

\maketitle

\begin{abstract}

    We show that adversarial reinforcement learning (ARL) can be used to
    produce market marking agents that are \emph{robust} to \emph{adversarial}
    and adaptively-chosen market conditions. To apply ARL, we turn the
    well-studied single-agent model of \citeauthor{avellaneda2008high}
    \shortcite{avellaneda2008high} into a discrete-time zero-sum game between a
    market maker and adversary.  The adversary acts as a proxy for other market
    participants that would like to profit at the market maker's expense.  We
    empirically compare two conventional single-agent RL agents with ARL, and
    show that our ARL approach leads to: 1) the emergence of risk-averse
    behaviour without constraints or domain-specific penalties; 2) significant
    improvements in performance across a set of standard metrics, evaluated
    with \emph{or without} an adversary in the test environment, and; 3)
    improved robustness to model uncertainty. We empirically demonstrate that
    our ARL method consistently converges, and we prove for several special
    cases that the profiles that we converge to correspond to Nash equilibria
    in a simplified single-stage game.

\end{abstract}

\section{Introduction}

Market making refers to the act of providing liquidity to a market by
continuously quoting prices to buy and sell a given
financial instrument. The difference between these prices is called a
\emph{spread}, and the goal of a market maker (MM) is to repeatedly earn a spread
by transacting in both directions.
However, this is not done without risk. Market makers expose themselves to
\emph{adverse selection}, where toxic agents exploit their technological or
informational edge, transacting with and thereby changing 
the MM's inventory before an adverse price move causes them a loss.
This is known as inventory risk, and has
been the subject of a great deal of research in optimal control, artificial
intelligence, and reinforcement learning (RL) literature.

A standard assumption in existing work has been that the MM
has perfect knowledge of market conditions. Robustness of market making
strategies to model ambiguity has only recently received attention - 
\citet{cartea2017algorithmic} extend optimal control approaches for the market
making problem to address the risk of model misspecification. In this paper, 
we deal with this type of risk by taking an
\emph{adversarial} approach.
We design market making agents that are robust to adversarial and
adaptively chosen market conditions by applying adversarial RL.  Our starting
point is a well-known single-agent mathematical model of market
making of \citet{avellaneda2008high}, which has been used extensively in the
quantitative finance
~\cite{cartea2015algorithmic,cartea2017algorithmic,gueant2013dealing,gueant2017optimal}.
We convert this into a discrete-time game, with a new ``market player'', the
adversary, that can be thought of as a proxy for other market participants that
would like to profit at the expense of the market maker. The adversary controls
dynamics of the market environment in a zero-sum game against the market maker.

We evaluate the sensitivity of RL-based strategies to three core parameters of
the market model dynamics that affect prices and execution, where each of these parameters
naturally varies over time in real markets. We thus go beyond
the fixed parametrisation of existing models --- henceforth called the
\textbf{\FIXED} setting --- with two extended learning settings. The
\textbf{\RANDOM} setting initialises each instance of the model (that is, episode 
in RL terminology) with values for the three parameters that are 
chosen independently and uniformly at random from three appropriate intervals. The
\textbf{\STRATEGIC} setting features an adversary -- an independent ``market'' learner whose
objective is to \emph{choose the parameters} from these same intervals so as to
\emph{minimise} the cumulative reward of the market maker in a zero-sum game. The
\RANDOM and \STRATEGIC settings are, on the one hand, more realistic than
the \FIXED setting, but on the other hand, significantly more complex for the
market making agent. We show that market making strategies trained in each of
these settings yield significantly different behaviour, and we demonstrate
striking benefits of our \STRATEGIC setting.

\subsection{Contributions}
The key contributions of this paper are as follows:
\begin{itemize}[leftmargin=0.35cm]
    \item We introduce a game-theoretic adaptation of a standard
        mathematical model of market making. Our adaption is 
        useful to train robust MMs, and evaluate their performance in the
        presence of epistemic risk (Sections~\ref{sec:model}~and~\ref{sec:one_shot}).

    \item We propose an algorithm for adversarial reinforcement learning in the
        spirit of RARL~\cite{pinto2017robust}, and demonstrate its
        effectiveness in spite of the well known challenges
        associated with finding Nash equilibria of stochastic
        games~\cite{littman1994markov}
        (Sections~\ref{sec:stochastic_games}~and~\ref{sec:experiments}).

    \item We thoroughly investigate the impact of three environmental settings
        (one adversarial) on learning market making.  We show that training
		against a \STRATEGIC adversary \emph{strictly dominates} the other
		two settings (\FIXED and \RANDOM)
        in terms of a set of standard desiderata, including the
        Sharpe ratio (Section~\ref{sec:experiments}).

    \item We prove that, in several key instances of the \STRATEGIC setting,
        the single-stage instantiation of our game has a Nash
        equilibrium resembling that found by our ARL algorithm for the
		multi-stage game. We then confirm broader existence of (approximate)
		equilibria in the multi-stage game by empirical best response
		computations (Sections~\ref{sec:one_shot} and~\ref{sec:experiments}).

\end{itemize}

\subsection{Related Work}\label{sec:related_work}

\paragraph{Optimal control and market making.}
The theoretical study of market making originated from the pioneering work of
\citet{ho1981optimal},
\citet{glosten1985bid} and
\citet{grossman1988liquidity}, among
others. Subsequent work focused on characterising optimal behaviour under
different market dynamics and contexts. Most relevant is the work of 
\citet{avellaneda2008high}, who
incorporated new insights into the dynamics of the limit order book to give a
new market model, which is the one used in this paper. They derived closed-form
expressions for the optimal strategy of an MM with an exponential utility function when
the MM has perfect knowledge of the model and its parameters. 
This same problem was then studied for other utility functions~%
\cite{fodra2012high,gueant2013dealing,cartea2015algorithmic,cartea2015order,gueant2017optimal}.
As mentioned above, \citet{cartea2017algorithmic}
study the impact of uncertainty in the model of
\citeauthor{avellaneda2008high}~\shortcite{avellaneda2008high}: they drop the
assumption of perfect knowledge of market dynamics, and consider how an MM
should optimally trade while being robust to possible misspecification.  This type of
\emph{epistemic risk} is the primary focus of our paper.

\paragraph{Machine learning and market making.}
Several papers have applied AI techniques to design automated market makers for
financial markets.\footnote{A separate strand of work in AI and Economics and
    Computation has studied automated market makers for prediction markets, see
    e.g.,~\citeauthor{othman}~\shortcite{othman}. While some similarities to
the financial market making problem pertain, the focus in that strand of work
focusses much more on price discovery and information aggregation.}
\citeauthor{chan2001electronic}~\shortcite{chan2001electronic} focussed on the
impact of noise from uninformed traders on the quoting behaviour of a market
maker trained with reinforcement learning.
\citeauthor{AbernethyK13}~\shortcite{AbernethyK13} used an \emph{online
learning} approach.
\citeauthor{spooner2018market}~\shortcite{spooner2018market} later explored the
use of reinforcement learning to train inventory-sensitive market making agents
in a fully data-driven limit order book model. Most recently,
\citeauthor{gueant2019deep}~\shortcite{gueant2019deep} addressed scaling issues
of finite difference approaches for high-dimensional, multi-asset market making
using model-based RL.
While the approach taken in this paper is also based on RL, unlike the majority
of these works, our underlying market model is taken from the mathematical
finance literature. There, models are typically analysed using methods from
optimal control. We justify this choice in Section~\ref{sec:model}.  To the
best of our knowledge, we are the first to apply ARL to derive trading
strategies that are robust to epistemic risk.

\paragraph{Risk-sensitive reinforcement learning.}
Risk-sensitivity and safety in RL has been a highly active topic
for some time. This is especially true in robotics where exploration is very
costly. For example, \citeauthor{tamar2012policy}~\shortcite{tamar2012policy}
studied policy search in the presence of variance-based risk criteria, and
\citeauthor{bellemare2017distributional}~\shortcite{bellemare2017distributional}
presented a technique for learning the full distribution of (discounted)
returns; see
also~\citeauthor{garcia2015comprehensive}~\shortcite{garcia2015comprehensive}.
These techniques are powerful, but can be complex to implement and can suffer
from numerical instability. This is especially true when using exponential
utility functions which, without careful consideration, may diverge early in
training due to large negative rewards~\cite{madisson2017particle}.
An alternative approach is to train agents in an adversarial
setting~\cite{pinto2017robust,perolat2018actor} in the form of a zero-sum game.
These methods tackle the problem of epistemic risk\footnote{The problem of
    robustness has also been studied outside of the use of adversarial
learning; see, e.g.,~\cite{rajeswaran2016epopt}.} by explicitly accounting for
the misspecification between train- and test-time simulations. This robustness
to test conditions and adversarial disturbances is especially relevant in
financial problems and motivated the approach taken in this paper.

\section{Trading Model}\label{sec:model}

We consider a standard model of market making as introduced by
\citeauthor{avellaneda2008high}~\shortcite{avellaneda2008high} and
studied by many others including
\citeauthor{cartea2017algorithmic}~\shortcite{cartea2017algorithmic}. The MM
trades a single asset for which the price, $Z_n$, evolves stochastically. 
In discrete-time,
\begin{equation}\label{eq:midprice}
    Z_{n+1} = Z_n + b_n \Delta t + \sigma_n W_n,
\end{equation}
where $b_n$ and $\sigma_n$ are the \emph{drift} and \emph{volatility}
coefficients, respectively. The randomness comes from the sequence of
independent, Normally-distributed random variables, $W_n$, each with mean zero
and variance $\Delta t$. The process begins with initial value $Z_0 = z$ and
continues until step $N$ is reached.

The market maker interacts with the environment at each step by placing limit
orders around $Z_n$ to buy or sell a single unit of the asset. The prices at which the MM
is willing to buy (bid) and sell (ask) are denoted by $p_n^+$ and $p_n^-$,
respectively, and may be expressed as offsets from $Z_n$:
\begin{equation}\label{eq:offsets}
    \delta_n^\pm = \pm [Z_n - p_n^\pm] \geq 0;
\end{equation}
these may be updated at each timestep at no cost to the agent. Equivalently, we
may define:
\begin{equation}
    \begin{aligned}
        \psi_n &= \delta_n^+ + \delta_n^- > 0, \\
        p_n &= \frac{1}{2}(p_n^+ + p_n^-) =
            Z_n + \frac{1}{2}(\delta_n^- - \delta_n^+),
    \end{aligned}
\end{equation}
called the \emph{quoted spread} and \emph{reservation price}, respectively.
These relate to the agent's need for immediacy and bias in execution, amongst
other things.

In a given time step, the probability that one or both of the agent's limit
orders are executed depends on the liquidity in the market and the values
$\delta_n^\pm$. 
Transactions occur when market orders, which arrive at random times, have
sufficient size to consume one of the agent's limit orders. 
These interactions are modelled by independent
Poisson processes, denoted by $N_n^+$ and $N_n^-$ for the bid and ask sides,
respectively, with intensities $\lambda_n^\pm$; not to be confused with the
terminal timestep $N$. The dynamics of the agent's inventory process, or
\emph{holdings},~$H_n$, are then captured by the difference between these two
terms,
\begin{equation}
    H_n = (N_n^+ - N_n^-) \in [\underline{H}, \overline{H}],
\end{equation}
where $H_0$ is known and the values of $H_n$ are constrained so that trading
stops on the opposing side of the book when either limit is reached. 
The order arrival intensities are given by:
\begin{equation}\label{eq:order_arrival}
    \lambda_n^\pm = A_n^\pm e^{-k_n^\pm \delta_n^\pm},
\end{equation}
where $A_n^\pm, k_n^\pm > 0$ describe the \emph{rate of arrival of market orders} and
\emph{distribution of volume in the book}, respectively. This particular form
derives from assumptions and observations on the structure and behaviour of
limit order books which we omit here for simplicity;
see~\citet{avellaneda2008high}, \citet{gould2013limit}, and \citet{abergel2016limit} for more details.

In this framework, the evolution of the market maker's cash is given by the
difference relation,
\begin{equation}
    \begin{split}
        X_{n+1} &= X_n + p_n^- \Delta N_n^- - p_n^+ \Delta N_n^+, \\
            &= X_n + \delta_n^- \Delta N_n^- + \delta_n^+ \Delta N_n^+ - Z_n
            \Delta H_n,
    \end{split}
\end{equation}
where $\Delta N_n^\pm \equiv N_{n+1}^\pm - N_n^\pm$. The cash flow
is a combination of: the profit due to executing at the premium $\delta_n^\pm$,
and the change in value of the agent's holdings. The total value accumulated by
the agent by timestep $n$ may thus be expressed as the sum of the cash held and
value invested: $\Pi(X_n, H_n, Z_n)$, where
\begin{equation}\label{eq:mtm}
    \Pi(X, H, Z) = X + HZ,
\end{equation}
and $\Pi_n \equiv \Pi(X_n, H_n, Z_n)$. This is known as the
\emph{mark-to-market} (MtM) value of the agent's portfolio.

\paragraph{Why not use a data driven approach?}
Previous research into the use of RL for market making ---
and trading more generally --- has focussed on data-driven limit order book
models; see \citet{nevmyvaka2006reinforcement}, \citet{spooner2018market}, and
\citet{vyetrenko2019risk}. 
These methods, however, are not amenable to the type of analysis presented in
Section~\ref{sec:one_shot}. Using an analytical model allows us to examine the
characteristics of adversarial training in isolation while minimising systematic
error due to bias often present in historical data.

\section{Game Formulation and Single-Stage Analysis}
\label{sec:one_shot}

We use the market dynamics above to define a zero-sum
stochastic game between a market maker and an adversary
that acts as a proxy for all other market participants.

\begin{definition}\label{def:adv_mm}[Market Making Game]
    The game between MM and an adversary has $N$ stages.  At each stage,
    MM chooses~$\delta^\pm$ and the adversary $\{b, A^\pm, k^\pm\}$. The
    resulting stage payoff is given by expected change in MtM value of the
    MM's portfolio, i.e., $\mathbb{E}[\Delta\Pi]$, see \eqref{eq:mtm}. The
    total payoff paid by the adversary to MM is the sum of the stage payoffs.
\end{definition}

In the remainder of this section, we study theoretically the game when $N=1$,
i.e., when there is a single stage. Later, in Section~\ref{sec:experiments}, we
analyse empirically the game for $N=200$.

\paragraph{Single-stage analysis.}
At each stage, the MM's payoff may be unrolled to give:
\begin{equation}\label{eq:payoff}
    f(\delta^\pm; b, A^\pm, k^\pm) = \lambda^+\cdot(\delta^+ + b) +
    \lambda^-\cdot(\delta^- - b) + bH.
\end{equation}
For certain parameter ranges, this equation is concave in $\delta^\pm$.

\begin{lemma}[Payoff Concavity in $\delta^\pm$]\label{lem:concavity}
    The payoff~\eqref{eq:payoff} is a concave function of $\delta^\pm$ on the
    intervals $[0, \sfrac{2}{k} - b]$, and $[0, \sfrac{2}{k} + b]$,
    respectively.
\end{lemma}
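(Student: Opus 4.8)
The plan is to exploit the fact that, once the arrival intensities $\lambda^\pm = A^\pm e^{-k^\pm\delta^\pm}$ of~\eqref{eq:order_arrival} are substituted into~\eqref{eq:payoff}, the payoff is \emph{additively separable} in $\delta^+$ and $\delta^-$. Writing
\[
    f(\delta^\pm;\,b,A^\pm,k^\pm) = \underbrace{A^+ e^{-k^+\delta^+}(\delta^+ + b)}_{g_+(\delta^+)} \;+\; \underbrace{A^- e^{-k^-\delta^-}(\delta^- - b)}_{g_-(\delta^-)} \;+\; bH,
\]
the term $bH$ is constant in $\delta^\pm$ and the mixed partial $\partial^2 f/\partial\delta^+\partial\delta^-$ vanishes, so the Hessian of $f$ in $(\delta^+,\delta^-)$ is diagonal. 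Consequently $f$ is jointly concave on a product of intervals precisely when $g_+$ and $g_-$ are each concave on the respective factor, and the lemma reduces to two one-dimensional second-derivative computations.

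For the bid side I would differentiate $g_+$ twice by the product rule: $g_+'(\delta^+) = A^+ e^{-k^+\delta^+}\bigl[1 - k^+(\delta^+ + b)\bigr]$, and then
\[
    g_+''(\delta^+) = -A^+ k^+ e^{-k^+\delta^+}\bigl[2 - k^+(\delta^+ + b)\bigr].
\]
Since $A^+,k^+>0$ and the exponential is strictly positive, $g_+''\le 0$ holds exactly when $2 - k^+(\delta^+ + b)\ge 0$, i.e. when $\delta^+ \le \sfrac{2}{k^+} - b$; this is the claimed interval for the bid offset. The ask side is the same computation with the sign of $b$ flipped: one obtains $g_-''(\delta^-) = -A^- k^- e^{-k^-\delta^-}\bigl[2 - k^-(\delta^- - b)\bigr]$, which is nonpositive iff $\delta^- \le \sfrac{2}{k^-} + b$.

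I do not expect a genuine obstacle here — the argument is elementary calculus — so the work is really just bookkeeping: substituting~\eqref{eq:order_arrival} faithfully, applying the product rule carefully so that the bracketed factor $2 - k^\pm(\delta^\pm \pm b)$ emerges with the correct sign, and invoking separability to lift the two scalar statements to the joint statement over the product domain. It is worth noting in passing that these intervals are the maximal ones on which concavity holds (outside them each $g_\pm$ is strictly convex), and that when $b$ is large enough that $\sfrac{2}{k^+} - b < 0$ the first interval is empty and the corresponding claim is vacuous.
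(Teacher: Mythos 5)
Your proposal is correct and follows essentially the same route as the paper: the paper likewise differentiates the payoff twice in $\delta^\pm$, observes that the Hessian is diagonal (your separability observation), and reads off concavity from nonpositivity of the diagonal entries $k^\pm\lambda^\pm\left[k^\pm(\delta^\pm \pm b) - 2\right]$, which is exactly your condition $\delta^+ \leq \sfrac{2}{k^+} - b$, $\delta^- \leq \sfrac{2}{k^-} + b$. Your added remarks on maximality of the intervals and vacuity when $\sfrac{2}{k^+} - b < 0$ go slightly beyond the paper's proof but are accurate.
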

\begin{proof}
    The first derivative of the payoff w.r.t.\ $\delta^\pm$ is given by:
    \begin{equation}\label{eq:f_d1}
        \frac{\partial f}{\partial \delta^\pm} = \lambda^\pm \left[1 -
        k(\delta^\pm \mp b) \right].
    \end{equation}
    The Hessian matrix is thus given by,
    \begin{equation*}
        \left\{\begin{matrix} k^+ \lambda^+
        \left[k^+(\delta^+ + b) - 2\right] & 0 \\ 0 & k^- \lambda^-
        \left[k^-(\delta^- - b) - 2\right] \end{matrix}\right\},
    \end{equation*}
    which is negative semi-definite iff $\delta^\pm \leq \frac{2}{k^\pm} \mp
    b$.
\end{proof}

Note that~\eqref{eq:payoff} is linear in both $b$ and $A^\pm$. From this, we
next show that there exists a Nash equilibrium (NE) when $\delta^\pm$ and~$b$
are controlled, and $k^\pm$ and $A^\pm$ are fixed.

\begin{theorem}[NE for fixed $A^\pm, k^\pm$]\label{thm:one_shot_nash}
    There is a pure strategy Nash equilibrium $(\delta^\pm_\star,
    b_\star)$ for $(\delta^+, \delta^-) \in [0, \sfrac{2}{k} - b] \times [0,
    \sfrac{2}{k} + b]$ and $b \in [\underline{b}, \overline{b}]$ (with finite
    $\underline{b},\overline{b}$),
    \begin{equation}\label{eq:one_shot_deltas}
        \delta_\star^\pm = \frac{1}{k} \pm b_\star; \quad
        b_\star = \begin{cases}
            \underline{b} &\quad H > 0, \\
            \overline{b} &\quad H < 0,
        \end{cases}
    \end{equation}
    which is unique for $|H| > 0$. When $H = 0$, there is an equilibrium for
    every value $b_\star \in [\underline{b}, \overline{b}]$.
\end{theorem}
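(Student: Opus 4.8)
The statement asserts that a specific strategy profile is a pure Nash equilibrium of the one-shot zero-sum game in which the MM maximises the payoff~\eqref{eq:payoff} over $\delta^\pm$ while the adversary minimises it over $b$, with $A^\pm,k^\pm$ held fixed. Because the payoff is concave in $\delta^\pm$ (on a suitable box) and affine in $b$, existence of an equilibrium is not really in doubt; the plan is therefore to compute each player's best-response correspondence \emph{explicitly}, intersect the two, and read off the claimed closed form, rather than invoking an abstract minimax theorem.

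First I would pin down the MM's best response to an arbitrary fixed $b$. By Lemma~\ref{lem:concavity}, $f(\cdot\,;b)$ is concave on $[0,\sfrac{2}{k}-b]\times[0,\sfrac{2}{k}+b]$, so a maximiser is characterised by the first-order condition~\eqref{eq:f_d1}; since $\lambda^\pm>0$, setting $\partial f/\partial\delta^\pm=0$ forces the unique stationary point $\delta^\pm=\sfrac{1}{k}\pm b$. I would then verify that this point actually lies in the admissible box (this reduces to requiring $b$ to lie in a bounded range about $0$, which is what the finiteness and placement of $[\underline b,\overline b]$ buys us), and note that the Hessian displayed in Lemma~\ref{lem:concavity} is negative \emph{definite} in the interior, so the MM's best response is single-valued and equal to $(\sfrac{1}{k}+b,\ \sfrac{1}{k}-b)$.

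Next, the adversary's best response to fixed $\delta^\pm$: since $f$ is affine in $b$ (as noted just after Lemma~\ref{lem:concavity}), $f(\delta^\pm;\cdot)$ is a line whose slope, read off from~\eqref{eq:payoff}, is $c(\delta^\pm)=\lambda^+-\lambda^-+H$; a minimiser over $[\underline b,\overline b]$ is thus $\underline b$ when $c>0$, $\overline b$ when $c<0$, and any point of the interval when $c=0$. To close the loop I would impose mutual consistency: at an equilibrium the MM's response forces $\delta^\pm_\star=\sfrac{1}{k}\pm b_\star$, and substituting $\lambda^\pm=A^\pm e^{-k\delta^\pm_\star}=A^\pm e^{-1}e^{\mp k b_\star}$ turns the slope into $c=A^+e^{-1}e^{-kb_\star}-A^-e^{-1}e^{kb_\star}+H$. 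The key calculation is then to show that, at the candidate profile, $\operatorname{sign}(c)=\operatorname{sign}(H)$ — so that for $H>0$ the adversary's unique best response is $\underline b$ (making $(\sfrac{1}{k}\pm\underline b,\underline b)$ a consistent pair, hence a NE), and symmetrically $\overline b$ for $H<0$ — and to rule out any rival corner profile. Uniqueness for $|H|>0$ follows since both best-response maps are single-valued there; for $H=0$ the slope $c$ degenerates and the adversary becomes indifferent, which I would check by hand to recover the stated continuum of equilibria.

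The main obstacle is the coupling between the two best-response maps: the adversary's optimal corner is governed by $\operatorname{sign}(c)$, but $c$ depends on $\delta^\pm$ and hence, once the MM best-responds, on $b_\star$ itself — so the heart of the proof is the fixed-point/consistency step, i.e.\ showing that $c$ evaluated at the MM-optimal quotes has sign exactly $\operatorname{sign}(H)$. This is where mild conditions on how the drift interval $[\underline b,\overline b]$ sits relative to the no-drift point are used (to dominate the exponential terms by $H$ and to exclude the opposite corner), and it is also where I expect the degenerate $H=0$ case to require its own short argument.
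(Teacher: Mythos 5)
Your proposal is correct in outline but takes a genuinely different route from the paper. The paper argues existence-first: it notes the payoff is concave in $\delta^\pm$ (Lemma~\ref{lem:concavity}), hence quasi-concave, and linear, hence quasi-convex, in $b$, invokes Sion's minimax theorem to obtain a pure saddle point, and then identifies it by solving the first-order condition~\eqref{eq:f_d1}; the selection of the corner for $b_\star$, the uniqueness claim, and the $H=0$ continuum are asserted rather than derived. You instead verify the specific profile by computing both best-response maps and intersecting them: the MM's single-valued response $\delta^\pm=\sfrac{1}{k}\pm b$ from concavity plus the FOC, and the adversary's corner response governed by the sign of $\partial f/\partial b=\lambda^+-\lambda^-+H$. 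This is more elementary (no minimax theorem) and actually supplies the step the paper omits, namely why the adversary sits at $\underline b$ for $H>0$ and $\overline b$ for $H<0$: your consistency calculation shows the relevant slope at the candidate quotes is $H-2Ae^{-1}\sinh(kb_\star)$ (for $A^\pm=A$, $k^\pm=k$), so the sign-matching with $H$ — and feasibility of the stationary point, which needs $|b_\star|\leq\sfrac{1}{2k}$ — genuinely depends on how $[\underline b,\overline b]$ sits around zero, a hypothesis the theorem leaves implicit and that Sion's theorem cannot supply. One caution: your remark that uniqueness for $|H|>0$ ``follows since both best-response maps are single-valued'' is not sufficient by itself, since single-valued responses do not preclude a second intersection; indeed, when $\overline b>0$ is feasible and $0<H<2Ae^{-1}\sinh(k\overline b)$, the opposite-corner profile $(\sfrac{1}{k}\pm\overline b,\,\overline b)$ is also a mutual best response under the literal payoff~\eqref{eq:payoff}, and for $H=0$ the profiles with $b_\star$ strictly between $0$ and a corner fail mutual best response unless the $(\lambda^+-\lambda^-)b$ cross term is negligible or dropped. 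So the ``rule out any rival corner profile'' and degenerate $H=0$ steps you flag are exactly where the substantive work lies (a domination condition on $|H|$ versus the $\sinh$ term, or a first-order-in-$\Delta t$ payoff); the paper's own proof is silent on these points as well, so carrying them out would make your argument strictly more complete than the published one.
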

\begin{proof}
    Equating \eqref{eq:f_d1} to zero and solving gives
    \eqref{eq:one_shot_deltas}. To prove that these correspond to a pure
    strategy Nash Equilibrium of the game we show that the payoff is
    quasi-concave (resp.\ quasi-convex) in the MM's (resp.\ adversary's)
    strategy and then apply Sion's minimax theorem~\cite{sion1958general}.
    This follows from Lemma~\ref{lem:concavity} for MM, and by the linearity of
    the adversary's payoff w.r.t.\ $b$. For $|H| > 0$, there is a unique
    solution.  When $H = 0$, there exists a continuum of solutions, all with
    equal payoff.
\end{proof}

The solution~\eqref{eq:one_shot_deltas} has a similar form to that of the
optimal strategy for linear utility with terminal inventory
penalty~\cite{fodra2012high}, or equivalently that of a myopic agent with
running penalty~\cite{cartea2015order}. Interestingly, the extension of
Theorem~\ref{thm:one_shot_nash} to an adversary with control over all three
model parameters $\{b, A^\pm, k^\pm\}$ yields a similar result. In this case we
refer the reader the extended version of the paper~\cite{spooner2020robust} for
a proof, and leave uniqueness to future work.


\begin{theorem}[NE for general case]
    There exists a pure strategy Nash equilibrium $(\delta^\pm_\star,
    \{b_\star, A^\pm_\star, k^\pm_\star\})$ of the MM game
    for~\eqref{eq:one_shot_deltas}, $A^\pm_\star = \underline{A}$ and
    $k^\pm_\star = \overline{k}$.
\end{theorem}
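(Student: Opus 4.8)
The plan is to write down the claimed profile explicitly and verify directly that it is a pure-strategy saddle point of the one-stage payoff $f$ in~\eqref{eq:payoff}; for a zero-sum game this is exactly a Nash equilibrium. Set $k^\pm_\star=\overline k$, $A^\pm_\star=\underline A$, let $b_\star$ be as in~\eqref{eq:one_shot_deltas}, and $\delta^\pm_\star=\sfrac{1}{\overline k}\pm b_\star$. Two one-sided conditions must then be checked: (i) $\delta^\pm_\star$ maximises $f(\,\cdot\,;b_\star,\underline A,\overline k)$ over the MM's action set, and (ii) $(b_\star,\underline A,\overline k)$ minimises $f(\delta^\pm_\star;\,\cdot\,)$ over the adversary's box in $(b,A^\pm,k^\pm)$.

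Condition (i) is Lemma~\ref{lem:concavity} together with the first-order argument of Theorem~\ref{thm:one_shot_nash}, now with $A^\pm,k^\pm$ held fixed at $\underline A,\overline k$: the payoff is concave in $\delta^\pm$ on the relevant intervals and setting~\eqref{eq:f_d1} to zero gives the interior maximiser $\delta^\pm_\star=\sfrac{1}{\overline k}\pm b_\star$. It then only remains to confirm feasibility, i.e.\ that $\delta^\pm_\star\in[0,\sfrac{2}{\overline k}-b_\star]\times[0,\sfrac{2}{\overline k}+b_\star]$, which holds under the same mild restriction on the drift interval assumed throughout the single-stage analysis (in particular $|b_\star|\le\sfrac{1}{2\overline k}$, which also makes the Hessian of Lemma~\ref{lem:concavity} negative semi-definite at $\delta^\pm_\star$).

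For condition (ii), fix $\delta^\pm=\delta^\pm_\star$ and write $f=A^+e^{-k^+\delta^+_\star}(\delta^+_\star+b)+A^-e^{-k^-\delta^-_\star}(\delta^-_\star-b)+bH$. The crucial observation is that for every fixed $b$ in range the bracketed coefficients $\delta^+_\star+b$ and $\delta^-_\star-b$ are non-negative, while each factor $A^\pm e^{-k^\pm\delta^\pm_\star}$ is increasing in $A^\pm$ and decreasing in $k^\pm$; hence, term by term, $f$ is minimised over $(A^\pm,k^\pm)$ at $A^\pm=\underline A$, $k^\pm=\overline k$, uniformly in $b$. This reduces the adversary's problem to minimising the resulting \emph{affine} function of $b$ over $[\underline b,\overline b]$, whose slope is $\underline A e^{-1}\bigl(e^{-\overline k b_\star}-e^{\overline k b_\star}\bigr)+H$; the sign of this slope is controlled by $\operatorname{sign}(H)$ (via the choice of $b_\star$ in~\eqref{eq:one_shot_deltas}), so its minimiser is precisely $b_\star$. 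Combined with (i), this gives the saddle point.

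The main obstacle is that the two players' feasible sets are coupled --- the MM's box $[0,\sfrac{2}{k}-b]\times[0,\sfrac{2}{k}+b]$ depends on $b$ and $k$, which the adversary controls --- and that $f$ is \emph{not} jointly concave/convex in the adversary's full vector $(b,A^\pm,k^\pm)$ (the $(A^\pm,k^\pm)$ block alone has an indefinite Hessian), so one cannot simply re-run the Sion's-theorem argument of Theorem~\ref{thm:one_shot_nash} in the joint strategy. The direct, term-by-term verification above avoids both difficulties, but it relies on the sign conditions $\delta^\pm_\star\pm b\ge 0$ holding across the whole drift interval; pinning down exactly which bounds on $\underline b,\overline b$ (and on $\overline k$) guarantee this --- and noting that the extreme adversary choices $\underline A,\overline k$ need no consistency check since the MM's action is held fixed in the best-response test --- is the only real bookkeeping. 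Uniqueness is not claimed and is left open.
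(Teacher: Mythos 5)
The paper itself does not include a proof of this theorem (it refers the reader to the extended version), so your attempt can only be compared with the intended route in spirit. Your overall plan --- verify the profile directly as a saddle point, using term-by-term monotonicity of \eqref{eq:payoff} in $A^\pm$ (linearity with non-negative coefficients) and in $k^\pm$ (decreasing since $\delta^\pm_\star>0$) to pin the adversary at $\underline{A},\overline{k}$ uniformly in $b$, and then reduce to the drift-only analysis of Theorem~\ref{thm:one_shot_nash} --- is the natural extension of the paper's single-stage argument and almost certainly the intended one; your remarks on the coupled action sets and on the feasibility bookkeeping are also apt.

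There is, however, a genuine gap in the adversary-side verification, at the step claiming that the slope of the reduced affine function of $b$ ``is controlled by $\mathrm{sign}(H)$'' so that its minimiser is $b_\star$. Your slope formula mixes two displays of the paper that are inconsistent by a sign: differentiating \eqref{eq:payoff} directly gives $\partial f/\partial\delta^\pm=\lambda^\pm\left[1-k^\pm(\delta^\pm\pm b)\right]$ (this is exactly what the Hessian in Lemma~\ref{lem:concavity} is the derivative of), not \eqref{eq:f_d1}, so the interior maximiser of \eqref{eq:payoff} is $\delta^\pm_\star=\sfrac{1}{\overline{k}}\mp b_\star$ rather than $\sfrac{1}{\overline{k}}\pm b_\star$. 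Under either consistent reading (correct the FOC, or correct the payoff to match \eqref{eq:f_d1}), the slope of the reduced problem at the candidate is $\lambda^+-\lambda^-+H=2\underline{A}e^{-1}\sinh(\overline{k}b_\star)+H$, not the expression you wrote. For $H>0$ the candidate is $b_\star=\underline{b}$, typically negative, so the $\sinh$ term is negative and is not in general dominated by $H$ (with magnitudes like $\underline{A}\approx 105$ it exceeds any $|H|\le 50$ unless $|\underline{b}|$ is very small); in that regime the adversary's best response to $\delta^\pm_\star$ is $\overline{b}$ --- indeed it prefers the interior drift solving $\lambda^+-\lambda^-=-H$ --- and the claimed profile is not a saddle point. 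So the bang-bang characterisation of $b_\star$ needs an additional hypothesis, e.g.\ $|H|\ge 2\underline{A}e^{-1}\sinh(\overline{k}|b_\star|)$ or drift bounds small relative to $|H|$; this is a substantive condition, and it is not implied by the sign conditions $\delta^\pm_\star\pm b\ge 0$ that you list as the only bookkeeping. To be fair, the same subtlety is glossed over in the paper's own Theorem~\ref{thm:one_shot_nash}, from which your construction inherits it; but as written your slope-sign claim only comes out favourably because condition (i) is taken from \eqref{eq:f_d1}/\eqref{eq:one_shot_deltas} while condition (ii) is computed from \eqref{eq:payoff}.
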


\section{Adversarial Training}
\label{sec:stochastic_games}

The single-stage setting is informative but unrealistic.  Next we investigate a
range of multi-stage settings with various different restrictions on the
adversary, and explore how adversarial training can improve the robustness of MM
strategies.  The following three types of adversary in turn increase the
freedom of the adversary to control the market's dynamics:

\begin{description}[wide,itemindent=\labelsep]
    \item[\FIXED.] The simplest possible adversary always plays the
        same fixed strategy, for us: $b_n = 0$, $A_n^\pm = 140$ and $k_n^\pm = 1.5$;
        these values match those originally used by
        \citeauthor{avellaneda2008high}~\shortcite{avellaneda2008high}. This
        amounts to a \emph{single-agent learning setting with stationary
        transition dynamics}.
    \item[\RANDOM.] The second type of adversary instantiates each episode
        with parameters chosen independently and uniformly at random from the
        following ranges: $b_n = b \in [-5, 5]$, $A_n^\pm = A \in [105, 175]$
        and $k_n^\pm = k \in [1.125, 1.875]$. These are chosen at the start of
        each episode and remain fixed until the terminal timestep $N$. This is
        analogous to \emph{single-agent RL with non-stationary transition
        dynamics}.
    \item[\STRATEGIC.] The final type of adversary chooses the model
        parameters $b_n, A_n, k_n$ (bounded as in the previous setting) at
        \emph{each step of the game}. This represents a \emph{fully-adversarial
        and adaptive learning environment}, and unlike the models presented in
        related work~\cite{cartea2017algorithmic}, the source of risk here is
        \emph{exogenous} and \emph{reactive} to the quotes of the MM.
\end{description}

The principle of adversarial learning here --- as with other successful
applications~\cite{goodfellow2014generative} --- is that if the 
MM plays a strategy that is not robust, then this can (and ideally will) be exploited
by the adversary. If a Nash equilibrium (NE) strategy is played by MM then 
their strategy is robust and cannot be exploited. 
While there are no guarantees that a NE will be
reached in our case, we show in Section~\ref{sec:experiments} 
via empirical best response computations that our ARL method does consistently
converge to reasonable approximate NE.
Moreover, we show that we consistently outperform previous approaches
in terms of absolute performance \emph{and} robustness to model ambiguity.

Robustness through ARL was first introduced by~\citet{pinto2017robust} who
demonstrated its effectiveness across a number of standard OpenAI gym domains.
We adapt their RARL algorithm to support incremental actor-critic based methods
and facilitate \emph{asynchronous training};
though many of the features remain the same.  The adversary is trained in
parallel with the market maker, is afforded the same access to state ---
\emph{including MM's inventory} $H_n$ --- and uses the same method for learning,
which is described below.

\subsection{Learning Configuration}
Both agents use the NAC-S($\lambda$) algorithm, a natural actor-critic
method~\cite{thomas2014bias} for stochastic policies (i.e., mixed strategies)
using semi-gradient SARSA($\lambda)$~\cite{rummery1994line} for policy
evaluation. The value functions are represented by
\emph{compatible}~\cite{peters2008natural} radial basis function networks of
100 Gaussian prototypes with \emph{accumulating} eligibility
traces~\cite{sutton2018reinforcement}.

\begin{description}[wide,itemindent=\labelsep]
    \item[States.] The state of the environment $s_n = (t_n, H_n)$ contains only
        the current time $t_n = \frac{nT}{N} = n\Delta t$ and the agent's
        inventory $H_n$, where the transition dynamics are governed by the
        definitions introduced in Section~\ref{sec:model}.
    \item[Policies.] The market maker learns a bivariate Normal policy for
        $\widetilde{p}_n \equiv p_n - S_n$ and $\psi_n$ with diagonal
        covariance matrix. The mean and variance vectors are modelled by linear
        function approximators using 3\textsuperscript{rd}-order polynomial
        bases~\cite{lagoudakis2003least}. 
		Both variances and the mean of the spread term, $\psi_n$, are kept positive via a
		softplus transformation.
        The adversary learns a Beta policy~\cite{chou2017improving},
        shifted and scaled to cover the market parameter intervals. The two
        shape parameters are learnt the same as for the variance of
        the Normal distribution above, with a translation of $+1$ to ensure
        unimodality and concavity.
    \item[Rewards.] The reward function is an adaptation of the optimisation
        objective of \citet{cartea2017algorithmic} and others, for the RL setting of
        incremental, discrete-time feedback:
        \begin{equation}
            R_n = \Delta\Pi_n - \zeta H_n^2 - \begin{cases}
                0 &\enskip \text{for } t < T, \\
                \eta H_n^2 &\enskip \text{otherwise},
            \end{cases}\label{eq:reward}
        \end{equation}
        where $\Pi_n$ refers to the MtM value of the agent's holdings
        (Eq.~\ref{eq:mtm}). This formulation can be either risk-neutral (RN) or
        risk-adverse (RA). For example, if $\eta > 0$ and $\zeta = 0$, then the
        agent is punished if the terminal inventory $H_N$ is non-zero, and the
        solution becomes time-dependent. We refer to this case as RA and the case of
        $\eta = \zeta = 0$ as RN.
\end{description}

\begin{figure*}
    \centering
    \begin{subfigure}[t]{0.48\linewidth}
        \includegraphics[width=1.0\linewidth]{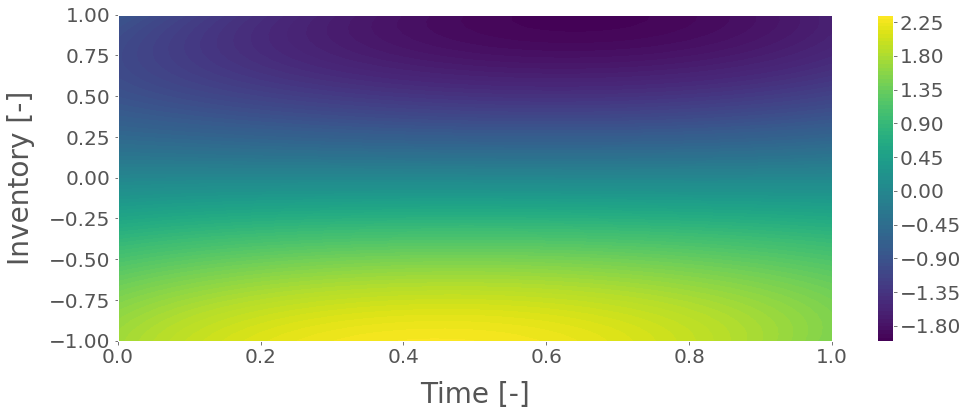}
        \caption{Reservation price offset, $\widetilde{p}(t, H)$.}\label{}
    \end{subfigure}
    \hfill
    \begin{subfigure}[t]{0.48\linewidth}
        \includegraphics[width=1.0\linewidth]{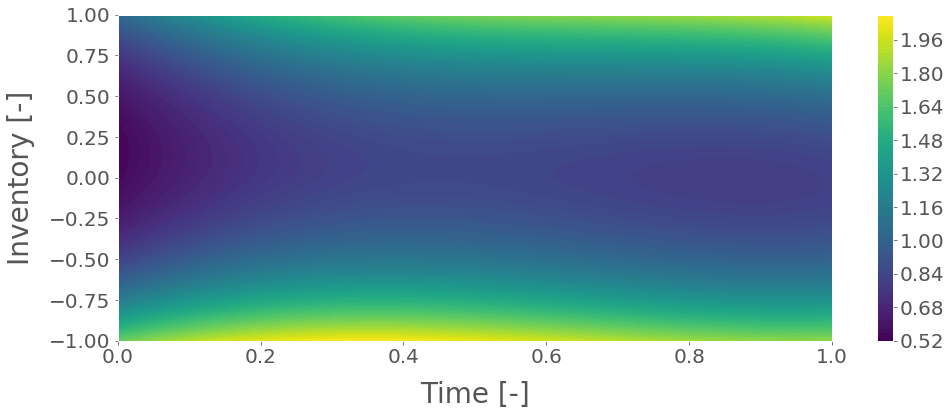}
        \caption{Quoted spread, $\psi(t, H)$.}\label{}
    \end{subfigure}

    \caption{Most probable action for the risk-averse Gaussian policy learnt
    using NAC-S($\lambda$) with $\eta = 0$ and $\zeta =
    0.01$.}\label{fig:ra_policy}
\end{figure*}

\begin{table}
    \centering
    \begin{subtable}[t]{1.0\linewidth}
        \centering
		\scalebox{0.78}{ 
        \begin{tabular}{@{}cc|llll@{}}
            \toprule
            \toprule
            $\eta$ & $\zeta$ & Term.\ wealth & Sharpe & Term.\ inventory & Avg.\ spread \\
            \midrule
            0.0 & 0.0 & $67.0 \pm 12.0$ & $5.57$ & $0.56 \pm 7.55$ & $1.42 \pm 0.02$ \\
            \midrule
            1.0 & 0.0 & $61.3 \pm 8.5$ & $7.25$ & $-0.04 \pm 1.19$ & $1.76 \pm 0.02$ \\
            0.5 & 0.0 & $63.0 \pm 9.9$ & $6.34$ & $0.03 \pm 1.24$ & $1.67 \pm 0.03$ \\
            0.1 & 0.0 & $66.4 \pm 9.4$ & $7.06$ & $-0.16 \pm 1.86$ & $1.42 \pm 0.02$ \\
            0.01& 0.0 & $66.8 \pm 11.1$ & $6.00$ & $0.87 \pm 4.38$ & $1.42 \pm 0.02$ \\
            \midrule
            0.0 & 0.01 & $63.4 \pm 6.8$ & $9.36$ & $0.01 \pm 1.45$ & $1.60 \pm 0.02$ \\
            0.0 & 0.001 & $66.1 \pm 7.4$ & $8.94$ & $0.15 \pm 2.97$ & $1.44 \pm 0.02$ \\
            \bottomrule
            \bottomrule
        \end{tabular}
		}

        \caption{Market makers trained against the \FIXED adversary.}\label{tab:fixed}
    \end{subtable}

    \vspace{1em}
    \begin{subtable}[t]{1.0\linewidth}
        \centering
		\scalebox{0.78}{ 
        \begin{tabular}{@{}cc|llll@{}}
            \toprule
            \toprule
            $\eta$ & $\zeta$ & Term.\ wealth & Sharpe & Term.\ inventory & Avg.\ spread \\
            \midrule
            0.0 & 0.0 & $66.7 \pm 11.8$ & $5.65$ & $0.38 \pm 7.14$ & $1.36 \pm 0.05$ \\
            \midrule
            1.0 & 0.0 & $59.4 \pm 7.6$ & $7.79$ & $0.02 \pm 1.09$ & $1.87 \pm 0.02$ \\
            0.5 & 0.0 & $62.9 \pm 8.4$ & $7.51$ & $-0.04 \pm 1.21$ & $1.68 \pm 0.02$ \\
            0.1 & 0.0 & $65.8 \pm 9.3$ & $7.07$ & $-0.18 \pm 1.57$ & $1.46 \pm 0.03$ \\
            0.01 & 0.0 & $66.6 \pm 9.9$ & $6.69$ & $-0.65 \pm 3.92$ & $1.47 \pm 0.02$ \\
            \midrule
            0.0 & 0.01 & $64.0 \pm 6.7$ & $9.54$ & $-0.06 \pm 1.31$ & $1.60 \pm 0.02$ \\
            0.0 & 0.001 & $65.9 \pm 7.2$ & $9.11$ & $-0.07 \pm 2.62$ & $1.44 \pm 0.02$ \\
            \bottomrule
            \bottomrule
        \end{tabular}
		}

        \caption{Market makers trained against the \RANDOM adversary.}\label{tab:random}
    \end{subtable}

    \vspace{1em}
    \begin{subtable}[h]{1.0\linewidth}
        \centering
		\scalebox{0.78}{ 
        \begin{tabular}{@{}rl|llll@{}}
            \toprule
            \toprule
            & & Term.\ wealth & Sharpe & Term.\ inventory & Avg.\ spread \\
            \midrule
            \multirow{4}{*}{RN} & $b$ & $65.5 \pm 6.9$ & $9.51$ & $0.04 \pm 2.14$ & $1.43 \pm 0.01$ \\
            & $A^\pm$ & $66.9 \pm 13.1$ & $5.11$ & $2.45 \pm 7.76$ & $1.46 \pm 0.02$ \\
            & $k^\pm$ & $66.9 \pm 12.7$ & $5.27$ & $0.52 \pm 7.92$ & $1.45 \pm 0.02$ \\
            & $b, A^\pm, k^\pm$ & $65.1 \pm 6.7$ & $9.78$ & $-0.05 \pm 1.94$ & $1.44 \pm 0.02$ \\
            \midrule
            \multirow{6}{*}{Full}
                & $\eta = 1.0$ & $60.5 \pm 6.8$ & $8.88$ & $-0.02 \pm 0.97$ & $1.75 \pm 0.02$ \\
                & $\eta = 0.5$ & $63.3 \pm 6.8$ & $9.32$ & $-0.07 \pm 1.05$ & $1.60 \pm 0.01$ \\
                & $\eta = 0.1$ & $64.8 \pm 6.7$ & $9.72$ & $-0.06 \pm 1.37$ & $1.49 \pm 0.02$ \\
                & $\eta = 0.01$ & $65.0 \pm 6.7$ & $9.69$ & $-0.08 \pm 1.89$ & $1.49 \pm 0.01$ \\
                & $\zeta = 0.01$ & $62.9 \pm 6.7$ & $9.43$ & $-0.03 \pm 1.19$ & $1.65 \pm 0.02$ \\
                & $\zeta = 0.001$ & $64.3 \pm 6.5$ & $9.85$ & $0.0 \pm 1.71$ & $1.44 \pm 0.01$ \\
            \bottomrule
            \bottomrule
        \end{tabular}
		}

        \caption{Market makers trained against a \STRATEGIC adversary.}\label{tab:strategic}
    \end{subtable}

    \caption{MM performance in terms of the standard desiderata. Each value
        was computed by evaluating the policy over $10^5$ episodes and
        computing the mean and sample standard deviation. Each test episode
        was generated with a \FIXED adversary. Wealth and average spread are
    measured in units of ``currency''; inventory in terms of units of the
``asset''; and the Sharpe ratio is unitless.}\label{tab:performance}
\end{table}

\section{Experiments}\label{sec:experiments}
In each of the experiments to follow, the value function was pre-trained for
1000 episodes (with a learning rate of $10^{-3}$) to reduce variance in early
policy updates. Both the value function and policy were then trained for $10^6$
episodes, with policy updates every 100 time steps, and a learning rate of
$10^{-4}$ for both the critic and policy. The value function was configured to
learn $\lambda = 0.97$ returns. The starting time was chosen uniformly at
random from the interval $t_0 \in [0.0, 0.95]$, with starting price $Z_0 = 100$
and inventory $H_0 \in [\underline{H} = -50, \overline{H} = 50]$. Innovations
in $Z_n$ occurred with fixed volatility $\sigma = 2$ between $[t_0, 1]$ with
increment $\Delta t = 0.005$.

\subsection{Market Maker Desiderata}\label{sec:performance_objectives}
Market makers typically aim to maximise profits while controlling inventory and
quoted spread. Higher terminal wealth with lower variance, less exposure to
inventory risk and smaller spreads are all indicators of the risk aversion (or
lack thereof) of a strategy. This is reflected in the majority of objective
functions for market markers studied in the literature. We consider the
following quantitative desiderata:

\begin{description}[wide,itemindent=\labelsep]
    \item [Profit and loss / Sharpe Ratio.] The distribution of a trading
        strategy's profit and loss is the most fundamental object used to
        measure its performance.  In our setup, we look at an agent's
        distribution of episodic terminal wealth.  In particular, we look at
        the moments $\mathbb{E}[\Pi_N]$ and $\mathbb{V}[\Pi_N]$ of this
        distribution, where the agent would like to maximise the former while
        minimising the latter.  To this end, a common metric used in financial
        literature and in industry is the Shape ratio:
        $\frac{\mathbb{E}[\Pi_N]}{\mathbb{V}[\Pi_N]}$, in essence, the reward
        per unit of risk.\footnote{While the Sharpe ratio would normally use
        returns, for our relative comparison of agents we use terminal
    wealth.} While larger values are better, it is important to note that the
    Sharpe ratio is not a sufficient statistic.

    \item [Terminal inventory.] The distribution of terminal inventory, $H_N$,
        tells us about the robustness of the strategy to adverse price
        movements. In practice, it is desirable for a market maker to
        finish the trading day with small absolute values for $H_N$; though
        this is not always a strict requirement.

    \item [Quoted spread.] The ``competitiveness'' of a market maker 
        is often discussed in terms of the average quoted spread:
        $\mathbb{E}_n[\psi_n]$. Tighter spreads imply more efficient markets and
        exchanges often compensate (with rebates) for smaller spreads. 
\end{description}

\subsection{Results}
\paragraph{\FIXED setting.}
We first trained MMs against the \FIXED adversary; i.e., a standard
single-agent learning environment. Both RN and RA formulations of
Eq.~\ref{eq:reward} were used with risk parameters $\eta \in \{1, 0.5, 0.1,
0.01\}$ and $\zeta \in \{0.01, 0.001\}$. Table~\ref{tab:fixed} summarises the
performance for the resulting agents. 
To provide some further intuition, we illustrate one of the learnt
policies in Figure~\ref{fig:ra_policy}. In this case, the agent learnt to
offset its price asymmetrically as a function of inventory, with increasing
intensity as we approach the terminal time.


\paragraph{\RANDOM setting.}
Next MMs were trained in an environment with a \RANDOM adversary; a simple
extension to the training procedure that aims to develop robustness to
epistemic risk. To compare with earlier results, the strategies were also
tested against the \FIXED adversary --- a summary of which, for the same set
of risk parameters, is given in Table~\ref{tab:random}. The impact on test
performance in the face of model ambiguity was then evaluated by comparing
market makers trained on the \FIXED adversary with those trained against the
\RANDOM adversary. Specifically, out-of-sample tests were carried out in an
environment with a \RANDOM adversary. This means that the model dynamics at
\emph{test-time} were different from those at training time. While not
explicitly adversarial, this misspecification of the model represents a
non-trivial challenge for robustness. Overall, we found that market makers
trained against the \FIXED adversary exhibited no change in average wealth
creation, but an increase of 98.1\% in the variance across all risk
parametrisations. On the other hand, market makers originally trained against
the \RANDOM adversary yielded a lower average 86.0\% increase in the
variance. The \RANDOM adversary helps, but not by much compared with the
\STRATEGIC adversary, as we will see next.


\paragraph{\STRATEGIC setting.}
In this setting, we first consider an adversary that controls the drift $b$
only.  With RN rewards, we found that the adversary learns a (time-independent)
binary policy (Figure~\ref{fig:adversary_policy}) that is identical to the
strategy in the corresponding single-stage game; see
Section~\ref{sec:one_shot}. We also found that the strategy learnt by the MM in
this setting generates profits and associated Sharpe ratios in excess of all
other strategies seen thus far when tested against the \FIXED adversary (see
Table~\ref{tab:strategic}). This is true also when comparing with tests run
against the \RANDOM or \STRATEGIC adversaries, suggesting that the
adversarially trained MM is indeed more robust to test-time model
discrepancies.

\begin{figure}[tb]
    \centering
    \includegraphics[width=1.0\linewidth]{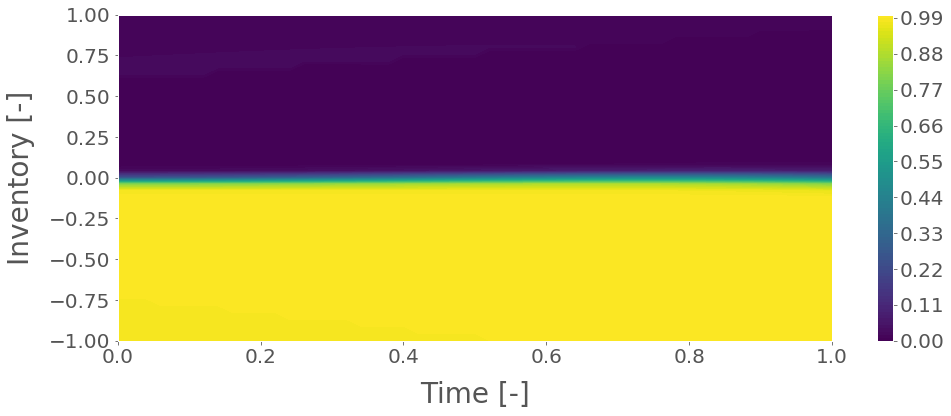}

    \caption{Policy learnt by the adversary for manipulating price against the
    market maker with $\eta = \zeta = 0$.}\label{fig:adversary_policy}
\end{figure}

This, however, does not extend to \STRATEGIC
adversaries with control over either \emph{only} $A^\pm$ or \emph{only}
$k^\pm$. In these cases, performance was found to be \emph{no better} than
the corresponding MMs trained in the \FIXED setting with a
conventional learning setup. The intuition for this again derives from the
single-stage analysis. That is, the adversary almost surely chooses a strategy
that minimises $A^\pm$ (equiv.\ maximises $k^\pm$) in order to decrease the
probability of execution, thus decreasing MM's profits derived from execution and
MM's ability to manage inventory effectively. The
control afforded to the adversary must correlate in some way with
sources of variance --- such as inventory speculation --- in order for
robustness to correspond to practicable forms of risk-aversion.

Then the  natural question is then whether an adversary with simultaneous
control over $b, A^\pm$, and $k^\pm$ produces strategies outperforming those
where the adversary controls $b$ alone. This is plausible since combining
the three model parameters can lead to more interesting strategies, e.g.,
driving inventory up/down only to increase drift at the peak (i.e., pump and
dump).
We investigated by training an adversary with control over all
three parameters. The resulting performance can be found in
Table~\ref{tab:strategic}, which shows an improvement in the Sharpe ratio of
0.27 and lower variance on terminal wealth.  Interestingly, these MMs also
quote tighter spreads on average --- the values even approaching that of the
risk-neutral MM trained against a \FIXED adversary. This indicates that the
strategies are able to achieve epistemic risk aversion \emph{without} charging
more to counterparties.




Exploring the impact of varying risk parameters 
of the reward function, $\eta$ and
$\zeta$ (see~\eqref{eq:reward}), we found that in all cases MM strategies
trained against a \STRATEGIC adversary with an RA reward outperformed their
counterparts in Tables~\ref{tab:fixed}~and~\ref{tab:random}. It is unclear,
however, if changes to the reward function away from the RN variant actually
improved the strategy in general. Excluding when $\zeta = 0.001$, all values
appear to do worse than for an adversarially trained MM with RN reward. It may
well be that the addition of inventory penalty terms actually boosts the power
of the adversary and results in strategies that try to avoid trading at all, a
frequent problem in this domain.



\paragraph{Verification of approximate equilibria.}
Holding the strategy of one player fixed, we empirically computed the best
response against it by training. We found consistently that neither the trader
nor adversary deviated from their policy. This suggests that our ARL method were
finding reasonable approximate Nash equilibria. While we do not provide a full
theoretical analysis of the stochastic game, these findings are corroborated by
those in Section~\ref{sec:one_shot}, since, as seen in
Figure~\ref{fig:adversary_policy}, the learned policy in the multi-stage setting
corresponds closely to the equilibrium strategy from the single-stage case that
was presented in Section~\ref{sec:one_shot}.

\section{Conclusions}
We introduced a new approach for learning trading strategies with ARL.
The learned strategies are robust to the discrepancies between the market model
in training and testing. 
We show that our approach leads to
strategies that outperform previous methods in terms of PnL and Sharpe ratio,
and have comparable spread efficiency. This is shown to be the case for
out-of-sample tests in all three of the considered settings, \FIXED, \RANDOM,
and \STRATEGIC. 
In other words, our learned strategies are not only more robust to
misspecification, but also dominate in overall performance. 

In some special cases we show that the learned strategies correspond to Nash
equilibria of the corresponding single-stage game. More widely, we empirically
show that the learned strategies correspond to approximate equilibria in the
multi-stage stochastic game. 

In future we plan to:
\begin{enumerate}[leftmargin=0.4cm]
    \item Extend to oligopolies of market makers.
    \item Apply to data-driven and multi-asset models.
    \item Further explore existence of equilibria and design provably-convergent algorithms.
\end{enumerate}

Finally, we remark that, while our paper focuses on market making, the approach
can be applied to other trading scenarios such as optimal execution and
statistical arbitrage, where we believe it is likely to offer similar benefits.
Further, it is important to acknowledge that this methodology has significant
implications for safety of RL in finance. Training strategies that are
explicitly robust to model misspecification makes deployment in the real-world
considerably more practicable.

\paragraph{Software.}
All our code is freely accessible on GitHub:
\url{https://github.com/tspooner/rmm.arl}.

\small
\bibliographystyle{named}
\bibliography{ml,maths,finance}

\end{document}